\newtheorem{lem}{Lemma}
\newabbrev\ISM{Slot/Matching (SM)}[SM]
\newabbrev\ACC{Access Number (ACC)}[ACC]
\def\imod#1{\allowbreak\mkern10mu({\operator@font mod}\,\,#1)}
\def\hex#1{\ensuremath{\text{#1}_\text{h}}}
\newcommand\equationref[1]{Eqn.~(\ref{#1})}
\newcommand\figureref[1]{Fig.~(\ref{#1})}
\newcommand\sectionref[1]{Sec.~(\ref{#1})}
\newcommand\tableref[1]{Tab.~(\ref{#1})}
\begin{document}

\title{Reliable Reception of Wireless Metering Data with Protocol Coding}

\author{\IEEEauthorblockN{Rasmus Melchior Jacobsen\IEEEauthorrefmark{1}\IEEEauthorrefmark{2}}
\IEEEauthorblockA{\IEEEauthorrefmark{1}Kamstrup A/S\\
Denmark\\
rmj@kamstrup.dk}
\and
\IEEEauthorblockN{Petar Popovski\IEEEauthorrefmark{2}}
\IEEEauthorblockA{\IEEEauthorrefmark{2}Aalborg University, Dept. of Electronic Systems\\
Denmark\\
petarp@es.aau.dk}}

\maketitle

\begin{abstract}
Stationary collectors reading wireless,
battery powered smart meters,
often operate in harsh channel conditions to cut network installation cost to a minimum,
challenging the individual link to each meter.
The desired performance measure is reliable reception of at least some data from as many as possible meters, 
rather than increasing the fraction of received packets from one meter. 
As a first step for improving reliable reception,
and motivated by the recent revision of Wireless M-Bus,
we propose the use of a deterministic packet transmission 
interval to group packets from the same meter.
We derive the probability of falsely pairing packets from different senders in the simple case of no channel errors,
and show through simulation and data from an experimental deployment the probability of false pairing with channel errors.
The pairing is an essential step towards recovery of metering data from as many as possible meters under harsh channel conditions.
From the experiment we find that more than 15\% of all conducted pairings are between two erroneous packets,
which sets an upper bound on the number of additional meters that can be reliably recovered.
\end{abstract}

\section{Introduction}
%Describe the problem:
%Efficient and reliable operation of a battery powered network.
%There is no feedback - cite collision channel
%We do not seek all packets, but overall reliability for the network.
%Small amount of data

%Reality:
%Degrees of freedom is limited.
%We need to seek handy ideas.
%Not optimal, but we use what is there.
%The protocol is largely fixed, and we aim to improve reliability in the reception by using protocol coding.
Large-scale deployments of battery-powered
wireless smart meters for the heat, cooling, and water market,
puts unusual requirements on the infrastructure for efficient system operation.
Meters are mainly transmitter-only,
broadcasting devices,
and reliable reception cannot be achieved using feedback \cite{Massey85}.
The meters send a relatively small amount of data (temperature, volume, energy, etc.),
and a relevant reliability measure is the number of \emph{individual meters}, i.e. meters from which at least one packet has been received. 
In Europe, the dominant protocol for wireless transmission of metering data from battery powered meters,
is the Wireless M-Bus standard \cite{13757-4:2011} with around 20 million deployed meters.
Considering the constraints induced by the standard, such as 
power-limited operation in the $868\unit{MHz}$ ISM and the 
established protocol structure, the room for optimization is primarily at the receiver's side.

%The problem of working in the context of an operating network,
%and seek improving solutions within its domain was recently termed \emph{protocol coding} \cite{5684048}.
In such a setting, we extract information about the data in the packets by using the timing structure in the access protocol. 
The use of protocol actions to encode data has been termed \emph{protocol coding} \cite{trillingsgaard2013}.
Specifically we will use the inherent nature of the \emph{packet transmission time interval} which is known to contain information \cite{394647}.
The interval in Wireless M-Bus has a deterministic structure as a consequence of other design choices.
In its recent revision, much focus has gone into increasing the battery efficiency of the meter,
but also to allow for efficient operation on battery powered receivers.
In this process,
a procedure for accurate intervals between broadcast packet transmissions was introduced to allow for a battery-powered receiver to synchronize to the transmission scheme for a meter,
thereby allowing the receiver to sleep and only start a receive window right before the transmission from the meter.
In addition,
a meter can possibly transmit (repeat) identical data set for multiple times.
Even if the data is encrypted,
then the same encrypted payload can be transmitted multiple times (up to 300 seconds).
This is a feature that is used by many devices,
as it allows to skip the encryption process before each transmission,
thereby saving energy.
Meters that operate in this way are supplied by, for example, Kamstrup.

\begin{figure}
  \includegraphics[width=\columnwidth]{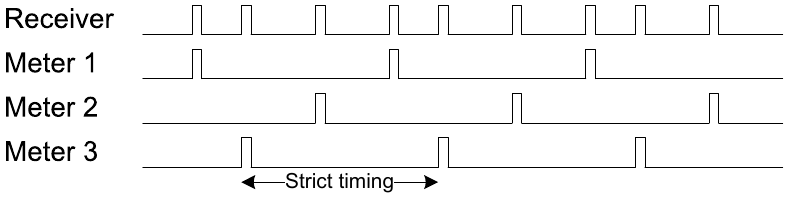}
  \caption{Packets with predictable transmission intervals.}
  \label{fig:overview}
  \vskip-0.7cm
\end{figure}

In this paper we utilize the timing structure in the transmission interval,
such that the receiver maps/groups the erroneously received packets from the same sender based on the packet transmission time.
An example of this is in \figureref{fig:overview},
where just by observing the arrival times on the receiver,
and knowing the transmission interval,
it is possible to readily group the packets arriving from distinct senders.
This grouping is a natural first step towards the recovery of meter data, as the receiver can apply 
various packet recovery algorithms to combine two or more erroneously received packets.
This is relevant under heavily error-prone channel conditions,
where packets experience high error rates,
if they arrive at all,
and where also the sender address is not reliable to pair packets.
This often occurs in deployments where reading reliability can be compromised due to network cost.

%It is interesting to note, that while we aim to use the timing for grouping packets from the same sender,
%we exploit an information which is not explicitly included in the actual transmitted packets.
%The way of using ``not actually included'' information,
%is similar to the work in \cite{5425359} where a way of permuting the packet transmission order in a direct wireless link is exploited,
%where the permutation indirectly contains an information in itself.

The optimal pairing can be achieved by joint decoding of packets based on their arrival times.
Instead, a feasible and practical,
progressive pair/no pair approach is proposed which we call the \ISM algorithm.
The \ISM algorithm judges upon every arrival at the receiver,
if a previous, erroneously received packet was registered,
for which the next packet is expected to arrive at the time of the current arrival.
The outcome can be \emph{pair} when the algorithm decides for the packets to originate from the same sender,
or alternatively \emph{no pair}.
In the event of pair between two erroneous packets,
the gain follows in a subsequent recovery process working across packets,
potentially restoring the otherwise unrecoverable data.

\begin{figure*}[ht!]
  \includegraphics[width=\textwidth]{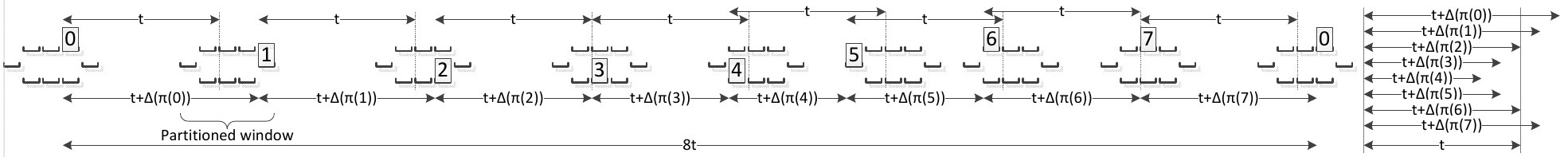}
  \caption{ACC cycle for $L=8$ transmissions from a meter.}
  \label{fig:slotsequence}
  \vskip-0.7cm
\end{figure*}

The system model is in the next section. %\sectionref{sec:sm},
The proposed packet pairing algorithm is described in \sectionref{sec:pp},
followed by an analysis of the probability of false detection in \sectionref{sec:fa},
being the case where the algorithm outcome is pair,
when the packets do not originate from the same meter.
Data from an experiment operating under severe channel conditions is given in \sectionref{sec:deployment},
followed by a discussion on the implementation feasibility in \sectionref{sec:feasability}.
Lastly, the conclusion is in \sectionref{sec:conclusion}.
 
\section{Background and System Model}
\label{sec:sm}

%The system model mimics the protocol behaviour of Wireless M-Bus.
%However, it should be fairly simple to come up with variations for other similarly behaving systems.

For a single meter,
the interval from packet $i$ to the next packet from the same meter $(i+1)$ depends on a specific field in the $i$th packet:
The \ACC field $x_i$.
The field is 1 byte, and for each transmission from a meter,
the access number is incremented with overflow in modulo $L=256$; that is $x_{i+j}=(x_i+j)\imod L$.
The interval is then $t+\Delta(\pi(x_i))$,
where $t$ is the average transmission interval and $\pi(x_i) = \left|x_i-\frac{L}{2}\right|$ is a jitter index,
and where $\Delta(\cdot)$ specifies the time offset relative to the nominal interval for a given jitter.

The $L$ possible values of $x_i$ each represent a \emph{slot} where the next packet $(i+1)$ with \ACC $x_{i+1}$ should be transmitted.
Note that each possible slot is \emph{not} uniquely separated in time,
but separated in time \emph{and} by the code given in the ACC $x_i$.
A resulting full \ACC cycle for a reduced value of $L=8$ is shown in \figureref{fig:slotsequence}.
From the figure it is clear that the interval varies per transmission,
following the cyclic behaviour in \figureref{fig:accslots} with $\frac{L}{2}+1$ jitter values.
The $L$ possible slots in where packet $(i+1)$ can arrive if $x_i$ is considered unreliable make up a (timely partitioned) \emph{window}.
Each transmitted \ACC $x_i$ from a meter has the received counterpart $y_i$,
where each bit in $y_i$ is randomly and independently in error with probability $\epsilon$.
In addition, a packet from a meter can also be dropped without being received at all (erasure) with probability $p$.

\begin{figure}
  \centering
  \includegraphics[width=\columnwidth]{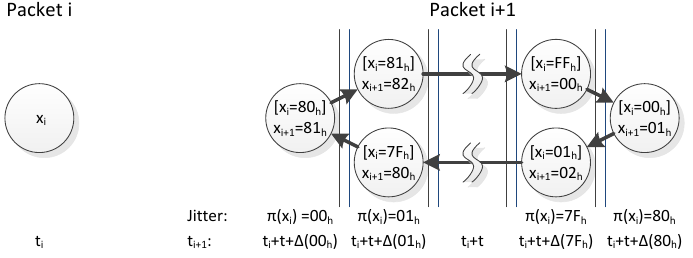}
  \caption{The possible slots for packet $(i+1)$ depending on $x_i$. The values of $x_i$ in the square brackets are the requirement for the packet $(i+1)$ to arrive in the particular slot.}
  \label{fig:accslots}
  \vskip-0.7cm
\end{figure}

For a base packet $i$ transmitted at time $t_i$ with \ACC $x_i$,
the nominal interval to a future packet $(i+j)$ from the same meter is $t_{x_i}^{nom}(j)=\sum_{j'=0}^{j-1}\left[t+\Delta(\pi(x_{i+j'}))\right]$.
This nominal transmission start time falls within the associated slot time boundaries,
where we define $\theta_{x_i}(j)$ to be the time relating to \ACC $x_i$ such that the slot for window $j$ starts at time $t_i+t_{x_i}^{nom}(j)-\theta_{x_i}(j)$.
The slot has width $\tau_{x_i}(j)$.
For Wireless M-Bus,
$\theta_{x_i}(j)$ and $\tau_{x_i}(j)$ are:
\begin{align*}
\theta_{x_i}(j)&=t_{x_i}^{nom}(j)\nu_a+\gamma_a,\\
\tau_{x_i}(j)&=t_{x_i}^{nom}(j)(\nu_a+\nu_b) + \gamma_a + \gamma_b,
\end{align*}
where $\nu_a$ describes a cumulative jitter and $\gamma_b$ a non-cumulative jitter.
$\nu_a=30\unit{ppm}$, $\nu_b=110\unit{ppm}$ under normal operating conditions.
The non-cumulative jitter is $1\unit{ms}$ per packet so $\gamma_a=\gamma_b=2\unit{ms}$ when considering any two packets.

\section{Packet Pairing}
\label{sec:pp}
The objective of the packet pairing is to provide a first step towards packet recovery on otherwise unrecoverable erroneous receptions where packets are combined across transmissions from the same meter with the same metering data.
The objective of the packet pairing is to relate an unreliable \ACC observation $y_i$,
with a, to be transmitted later, \ACC observation $y_\alpha$ which w.h.p. comes from the same meter.
Any other known parts from the received packets could also be used for the pairing in addition to the \ACC,
for example the sender address,
which would only increase the reliability of a correct pairing.
Any such addition is however not included in the analysis.
Remarkably, even though the \ACC field is only 8 bits,
it is actually possible,
as we shall see later,
to distinguish many more than 256 devices with high reliability.

While the \ACC field is considered in this work because the transmission interval between two packets depends on this field,
the work can be extended to many different flavors depending on the analyzed protocol.
An example is fixed interval, which corresponds almost to a TDMA schedule.
The trouble with such a setup is that it requires two-way communication to align transmissions from all devices in time.
What the \ACC jitter approach gives,
is that two-way ``scheduling'' is not needed,
as if two devices at one time have a colliding transmission,
they will w.h.p. be separated in time at the next transmission.
In fact, for two meters to continue having colliding transmission,
they have to have the exact same \ACC at the same transmission time,
which is an unlikely event.

\subsection{Error Events}
A transmitted \ACC from a meter is incremented for each transmission,
and its value affects the transmission interval.
This introduces a dependency on the error across any two packets $i$ and $(i+j)$ from the same meter.
A packet is said to be in error if it fails to comply with some error detection mechanism such as CRC,
and the \ACC part of the packet should be considered unreliable.
Take one unreliable \ACC observation $y_i$ with the underlying actually transmitted \ACC $x_i$.
Let $H(a,b)$ denote the Hamming distance between $a$ and $b$,
then with the independent bit error model,
we have $H(x_i,y_i)=0$ with probability $(1-\epsilon)^8$.
$H(x_i,y_i)=1$ with probability $8(1-\epsilon)^7\epsilon$, etc.
This uncertainty has a direct impact on when to expect the next packet from the same meter.
To capture this uncertainty,
we will introduce the notion of a \emph{virtual slot},
representing a possible slot in the future for the next transmission from the same meter.
For one erroneous \ACC $y_i$, a maximum of $L$ virtual slots can exist.
The probability of the next transmission $(i+1)$ from the same meter to fall within a specific virtual slot,
is a function of the number of bit errors incurred by picking the slot.
Specifically, from one base packet $i$ the set of virtual slots which imposes $b$ bit errors in $y_i$ are:
$\mathcal{X}_b(j)=\{\xi:H(y_i,\xi-j)=b\}$,
where $\mathcal{X}_b(j)$ is the set of virtual slots with $b$ ``known'' bit errors after $j$ steps.
Clearly $|\mathcal{X}_b(j)|={8\choose b}$.
For relevant low values of $\epsilon$ this means that there is one favorite virtual slot at step $j$,
namely $\xi=y_i+j$.
There are 8 less likely virtual slots, one for each possible single bit error in $y_i$, etc.

From this there are the following error events:
1) Bit error in $y_i$ introducing uncertainty about the arrival time of the next packet,
2) bit error in $y_{i+j}$ which together with the virtual slot specifies how likely the packets are to originate from the same meter,
3) false detection, where a packet arrives from a different meter in a virtual slot setup for $y_i$,
in a way where the pairing decides for them to come from the same meter.
The two first events are used next,
and the false detection event is analyzed in \sectionref{sec:fa}.

\subsection{The Slot/Matching Algorithm}
The \ISM algorithm is the basis in the pairing of packets from the same meter presented in this paper.
While there are many different ways to conduct the pairing, the \ISM algorithm is simple and puts a relaxing set of requirements to the receiver.
%Other, especially the causal type of algorithms, will require more storage and increased complexity at the receiver.
The algorithm operates in three steps, all triggered on the arrival of a packet $i$.
The steps depends on a storage structure comprising of virtual slots created for erroneously received packets.
Specifically, a virtual slot in the \ISM algorithm specifies:
1) Time boundaries (start and end time),
2) Reference to the erroneous base packet who initiated the creation of the virtual slot,
3) The number of bits in error in $y_i$ being the value of $b$ in $\mathcal{X}_b$ in which the slot belongs,
4) Expected \ACC for packet to arrive in the slot,
5) Step count $j$, specifying the number of intervals since the base packet.

Virtual slots are created when an erroneous packet is received.
The number of virtual slots created depends on the threshold policy used.
A simple policy is:
If the total Hamming distance on two packets $j$ steps apart:
\begin{equation}
D=H(\xi_\alpha-j,y_i)+H(\xi_\alpha,y_\alpha),
\label{eqn:D}
\end{equation}
is less than or equal $M$ bits,
then pair the packets.
In $D$, $i$ is the base packet and $\xi_\alpha$ is the expected \ACC registered for the virtual slot.
This gives, if $M=0$,
then only one virtual slot is created per arriving erroneous packet $i$,
namely the slot uniquely given by $y_i$ which is $\xi=y_i+j$.
Any other possible slot would have caused us to allow a bit error in $y_i$.
If $M=1$ is the threshold value, a total of 9 virtual slots are created, etc.

\begin{table}
\centering\small
\begin{tabular}{|lllll|}\hline
$\xi$ & $H(y_i,\xi-j)$ & Step & Start time & Duration \\\hline
$\hex{41}$ & 0 & 1 & $t_i+t^{nom}_{\hex{40}}(1)-\theta_{\hex{40}}(1)$ & $\tau_{\hex{40}}(1)$ \\
$\hex{42}$ & 1 & 1 & $t_i+t^{nom}_{\hex{41}}(1)-\theta_{\hex{41}}(1)$ & $\tau_{\hex{41}}(1)$ \\
$\hex{43}$ & 1 & 1 & $t_i+t^{nom}_{\hex{42}}(1)-\theta_{\hex{42}}(1)$ & $\tau_{\hex{42}}(1)$ \\
$\hex{45}$ & 1 & 1 & $t_i+t^{nom}_{\hex{44}}(1)-\theta_{\hex{44}}(1)$ & $\tau_{\hex{44}}(1)$ \\
$\hex{49}$ & 1 & 1 & $t_i+t^{nom}_{\hex{48}}(1)-\theta_{\hex{48}}(1)$ & $\tau_{\hex{48}}(1)$ \\
$\hex{51}$ & 1 & 1 & $t_i+t^{nom}_{\hex{50}}(1)-\theta_{\hex{50}}(1)$ & $\tau_{\hex{50}}(1)$ \\
$\hex{61}$ & 1 & 1 & $t_i+t^{nom}_{\hex{60}}(1)-\theta_{\hex{60}}(1)$ & $\tau_{\hex{60}}(1)$ \\
$\hex{01}$ & 1 & 1 & $t_i+t^{nom}_{\hex{00}}(1)-\theta_{\hex{00}}(1)$ & $\tau_{\hex{00}}(1)$ \\
$\hex{C1}$ & 1 & 1 & $t_i+t^{nom}_{\hex{C0}}(1)-\theta_{\hex{C0}}(1)$ & $\tau_{\hex{C0}}(1)$ \\\hline
\end{tabular}
\caption{Virtual slots created upon receiving erroneous packet $i$ with ACC $y_i=\hex{40}$.}
\label{tab:vslots}
\vskip-0.7cm
\end{table}

An example of the 9 virtual slots created for $M=1$ when an erroneous packet is received with \ACC $y_i=\hex{40}$ is in \tableref{tab:vslots}.
In this case $\mathcal{X}_0(1)=\{\hex{41}\}$ and $\mathcal{X}_1(1)=\{\hex{42},\hex{43},\hex{45},\hex{49},\hex{51},\hex{61},\hex{01},\hex{C1}\}$.
$H(y_i,\xi-j)$ is the Hamming distance between the received \ACC $y_i$ and the \ACC registered for the virtual slot minus the step count (\ACC increment).
Note from the table that there can be multiple slots $\xi$ for the same base packet describing the same time interval.
In the example this is true for $\xi=\hex{41}$ and $\xi=\hex{C1}$ as their jitter index are equal: $\pi\left(\hex{40}\right)=\pi\left(\hex{C0}\right)$.
One could think that for the virtual slots representing the same time,
but with different $\xi$,
that it is sufficient to only store the virtual slots with the shortest Hamming distance.
However, if the step count $j$ is incremented in the event where the packet is not found in the $j$th window,
then the time for the virtual slots diverges:
$\pi\left(\hex{40}\right)=\pi\left(\hex{C0}\right)$ but $\pi\left(\hex{41}\right)\neq\pi\left(\hex{C1}\right)$.

The algorithm upon receiving a packet $i$ is as summarized:
\begin{enumerate}
  \item If there are any virtual slots registered where the arrival time for packet $i$ is included in the slot boundaries, then for each virtual slot:
  \begin{itemize}
    \item Determine if candidate is sufficiently good according to \equationref{eqn:D},
    and if so, pair and remove all virtual slots $\mathcal{X}_b(\cdot)$ created by the same packet as now matches $i$.
  \end{itemize}
  \item For virtual slots where the slot boundary end time has passed,
  recompute slot start time, duration, increment $\xi$ and increment step count.
  If step count is too large (timeout), remove the virtual slot.
  \item If packet $i$ is erroneous, then add new virtual slots.
\end{enumerate}

A proper timeout on the number of times a virtual slot is recomputed depends on:
1) If the pairing should be used for packet recovery,
the timeout should not be larger than the expected number of times the same packet payload may be transmitted from the same meter,
and 2) by increasing the timeout, the receiver will require more memory as virtual slots can remain longer in the system (more on this later).

\section{False Detection Probability}
\label{sec:fa}
The \ISM algorithm makes an instantaneous decision about whether a pairing is valid or not upon the arrival of a packet,
and it is therefore relevant to find the probability of false detection:
The probability for a pairing to be established between two packets not originating from the same meter.
The analysis will make the following assumptions:
\begin{itemize}
  \item Collisions do not occur and transmissions are considered instantaneous.
  \item The pairing decision will only use the 8 \ACC bits,
  and ignore any improvement which can be added to further reduce the probability of false detection.
  \item Infinite population allowing to model the transmission rate for all meters as Poisson with load $\lambda=\frac{n}{t}$, $n$ being the number of meters in range of the receiver.
  \item Equal erasure probability $p$ for all meters.
  \item No drift in the transmission interval on a meter,
  hence the lead time is known and $\theta_{x_i}(j)$ for all meters.
\end{itemize}
The first assumption may seem controversial and reflects the case where a receiver,
in parallel,
is able to simultaneously receive multiple packets,
and where the errors incurred by this operation can be modelled with $\epsilon$ and $p$.
Any other receiver where only one packet can be received at a time,
will have a false detection probability lower than what we find in this section,
as the number of registered arrivals in the critical periods are fewer.
The assumption on Poisson arrivals reflect a normal operating system with no memory,
where all arrivals except the one we expect are random and untracked.
For the \ISM algorithm this is a valid approximation when there are no adversarial behaving meters,
but for any more advanced algorithm considering all meters,
clearly the full state space of all meters should be considered as a whole.

We will derive the false detection probability for the case where the bit error rate $\epsilon=0$,
and where the erasure probability is $p=0$.
First with a motivating example for $M=0$ and then for any $M$.
Through simulation the false detection probabilities for various values of $p$ and $\epsilon$ are found.

\subsection{False Detection Probability with $M=0$, $p=0$, and $\epsilon=0$}
When $M=0$, only one virtual slot is created per erroneously received base packet,
namely the slot $\xi$ for which $H(y_i,\xi-1)=0$ (or equivalently $\xi=y_i+1$).
Given $\epsilon=0$ and $p=0$,
we know that the desired packet \emph{will} arrive.
Also we know that the time from the slot start boundary to the arrival of the desired (true) packet from the same meter is $\theta_{\xi-1}(1)=\theta_{y_i}(1)$.
Abbreviate this time $\sigma$,
and the probability of false detection is completely given by the probability of a packet arriving during $\sigma$ having the exact same \ACC $\xi$ as being expected for the slot.
%The probability of false detection is the probability of \emph{any} packet $\alpha$ arriving \emph{before} the desired packet $(i+1)$ having \ACC $y_\alpha$ such that $H(y_i,y_\alpha-1)=0$,
%which will satisfy the threshold policy for $M=0$ bit errors.
Assume uniform distribution of the received \ACC{}s for all other meters which can arrive during $\sigma$,
then the probability of an arriving packet to have exactly the same \ACC as the true packet $(i+1)$ is $\frac{1}{L}$.
It must be the exact same, otherwise a bit error is acceptable which is not the case for $M=0$.
For $k$ arriving packets during $\sigma$,
the probability of \emph{any} of them picking the same \ACC is $1-\left(1-\frac{1}{L}\right)^k$.
Summing the probability for $k>0$ Poisson arrivals where any of the $k$ arrivals have the problematic \ACC and the false detection probability for $M=0$ follows as:
\begin{align*}
q_0&=\sum_{k=1}^\infty(\lambda\sigma)^k\frac{e^{-\lambda\sigma}}{k!}\left(1-\left(1-\frac{1}{L}\right)^k\right)\\
&=1-e^{-\lambda\sigma}\left[1+\sum_{k=1}^\infty\frac{\left(\lambda\sigma\left(1-\frac{1}{L}\right)\right)^k}{k!}\right]\\
&=1-e^{-\frac{\lambda\sigma}{L}}.
\end{align*}

\begin{figure}
  \centering
  \subfloat[$\mathcal{A}=\{\{\hex{61}\},\{\hex{51}\},\{\hex{49}\},\{\hex{45}\},\{\hex{43}\},\{\hex{42}\}\}$, $\mathcal{B}=\{\hex{41},\hex{C1}\}$.]{\label{fig:vslots40h}\includegraphics[width=\columnwidth]{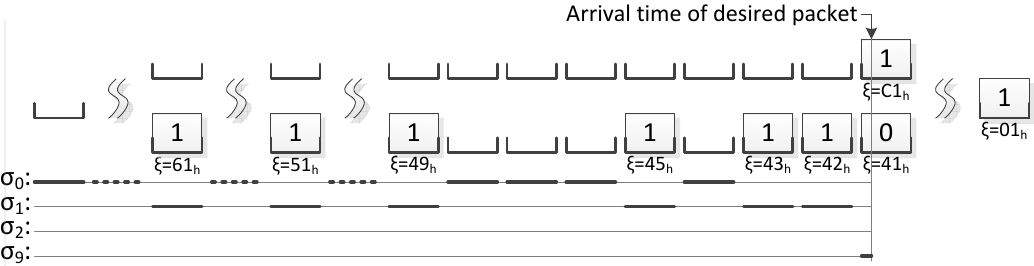}}\\
  \subfloat[$\mathcal{A}=\{\{\hex{61},\hex{A1}\},\{\hex{31}\},\{\hex{29}\},\{\hex{25}\},\{\hex{23}\},\{\hex{22}\}\}$, $\mathcal{B}=\{\hex{21}\}$.]{\label{fig:vslots20h}\includegraphics[width=\columnwidth]{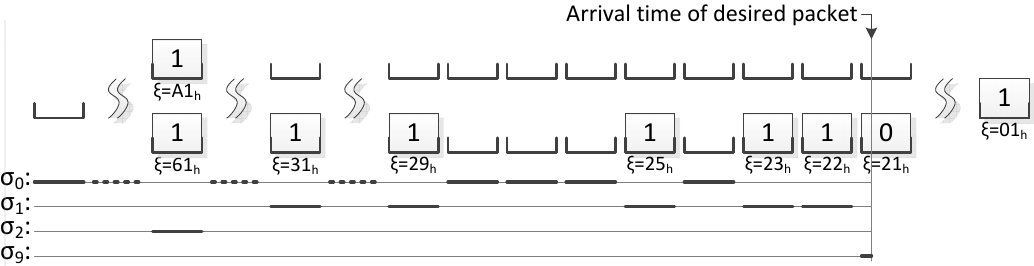}}\\
  \caption{Virtual slots for $M=1$ when $y_i=\hex{40}$ and $y_i=\hex{20}$.}
  \label{fig:vslots}
  \vskip-0.7cm
\end{figure}

\subsection{General False Detection Probability with $p=0$ and $\epsilon=0$}
Now the sum of bits in error must be less than or equal $M$.
Let $\mathcal{X}=\cup_{b=0}^M\mathcal{X}_b(1)$.
Only the virtual slots in $\mathcal{X}$ located \emph{before} the desired transmission is relevant when $p=0$ and $\epsilon=0$.
Define the sets:
\begin{align*}
\mathcal{A}&=\left\{a\in\mathcal{A}:a=\{\xi\in\mathcal{X}:\pi(\xi-1)=s\}\forall s=0,\ldots,\pi(y_i)-1\right\}\\
\mathcal{B}&=\left\{\xi\in\mathcal{X}:\pi(\xi-1)=\pi(y_i)\right\}\\
\mathcal{C}&=\mathcal{A}\cup\{\mathcal{B}\},
\end{align*}
where each element in the sets corresponds to a ``timebin'' which can contain multiple virtual slots.
$\mathcal{A}$ contains the timebins of full virtual slots located before the desired transmission,
and $\mathcal{B}$ is similar but for the virtual slots overlapping in time with the desired transmission.
Examples of $\mathcal{A}$ and $\mathcal{B}$ are in \figureref{fig:vslots}.
Now define $U(\xi)=\{u:H(\xi,u)\leq M-H(y_i,\xi-1)\}$ to be the surjective function defining the set of possible \ACC combinations for slot $\xi$,
such that the threshold policy of $M$ bits is satisfied.
Then the total number of \ACC combinations allowed for a timebin in $\mathcal{C}$ for a false detection is:
\[ d(c)=\left|\cup_{\xi\in c}U(\xi)\right|. \]
In the example in \figureref{fig:vslots40h} $d(a)=1$ for all $a\in\mathcal{A}$,
and $d(\mathcal{B})=9$ as the ``error contribution'' from the virtual slot $\xi=\hex{C1}$ is implicitly included in the one bit error combinations of $\xi=\hex{41}$.
We will use the number of combinations to divide the slot time before the arrival into parts,
depending on the possible number of error combinations.
Let the part allowing for $\beta$ \ACC combinations have duration:
\begin{equation}
\sigma_\beta=\sum_{a\in\mathcal{A}:d(a)=\beta}\tau'_a+\delta[d(\mathcal{B})-\beta]\theta_{y_i}(1),
\label{eqn:sigmabeta}
\end{equation}
where $\tau'_a$ is the duration of timebin $a$,
and where $\delta[\cdot]$ is the Kronecker delta ensuring that the duration of the last timebin is included in the duration allowing $d(\mathcal{B})$ false \ACC combinations.
There is a maximum of $L$ parts, where most of them have a duration of zero if the number of error combinations is not defined,
hence $\sum_{\beta=0}^L\sigma_\beta$ is the total duration of timebins before the expected arrival.

\begin{figure*}[ht]
  \centering
  \subfloat[$M=0$ for straight lines, otherwise $M=1$.]{\label{fig:falsealarm}\includegraphics[width=0.32\textwidth]{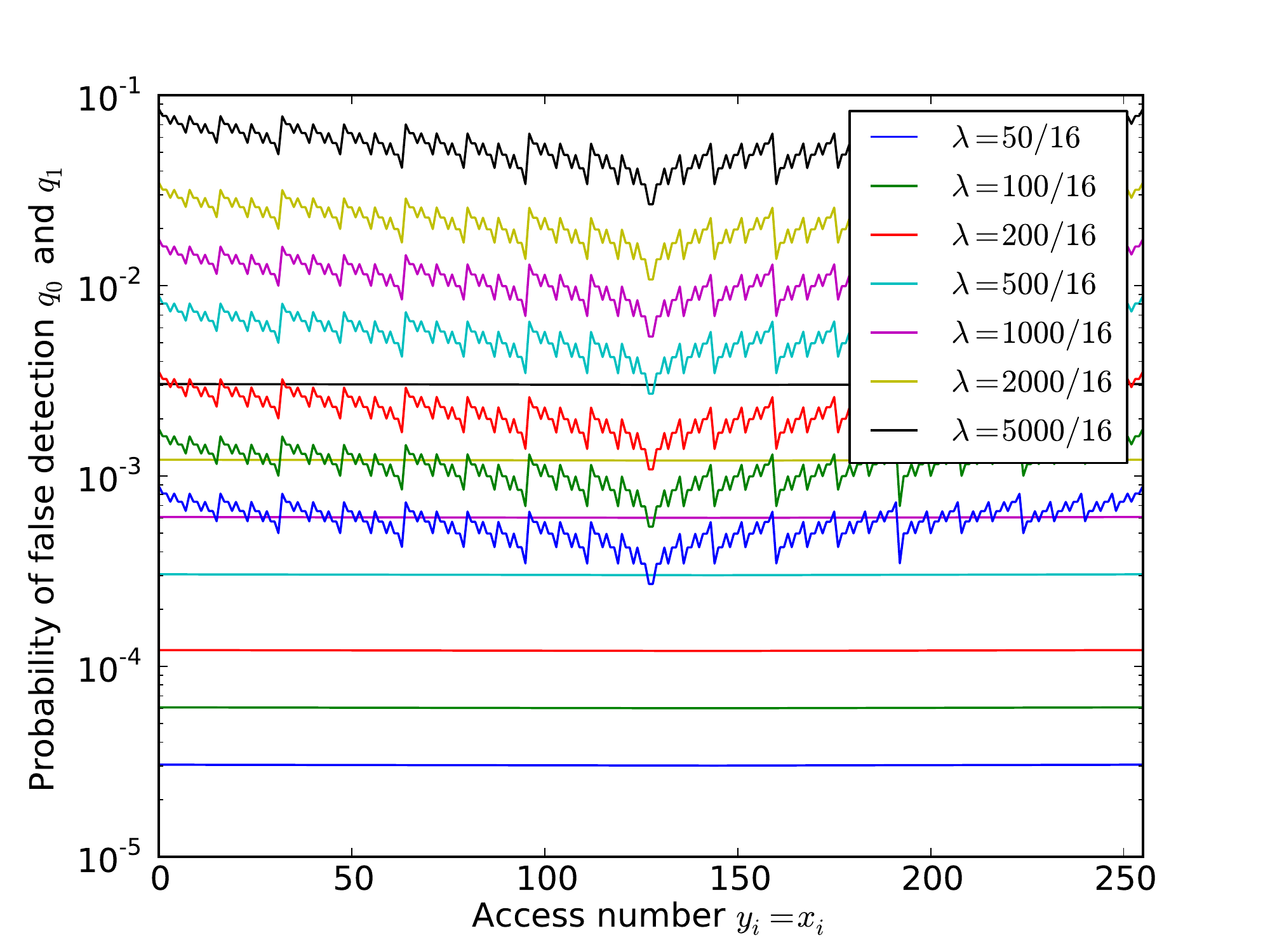}}
  \quad\subfloat[Simulated, $M=0$.]{\label{fig:erasuresweepM0}\includegraphics[width=0.32\textwidth]{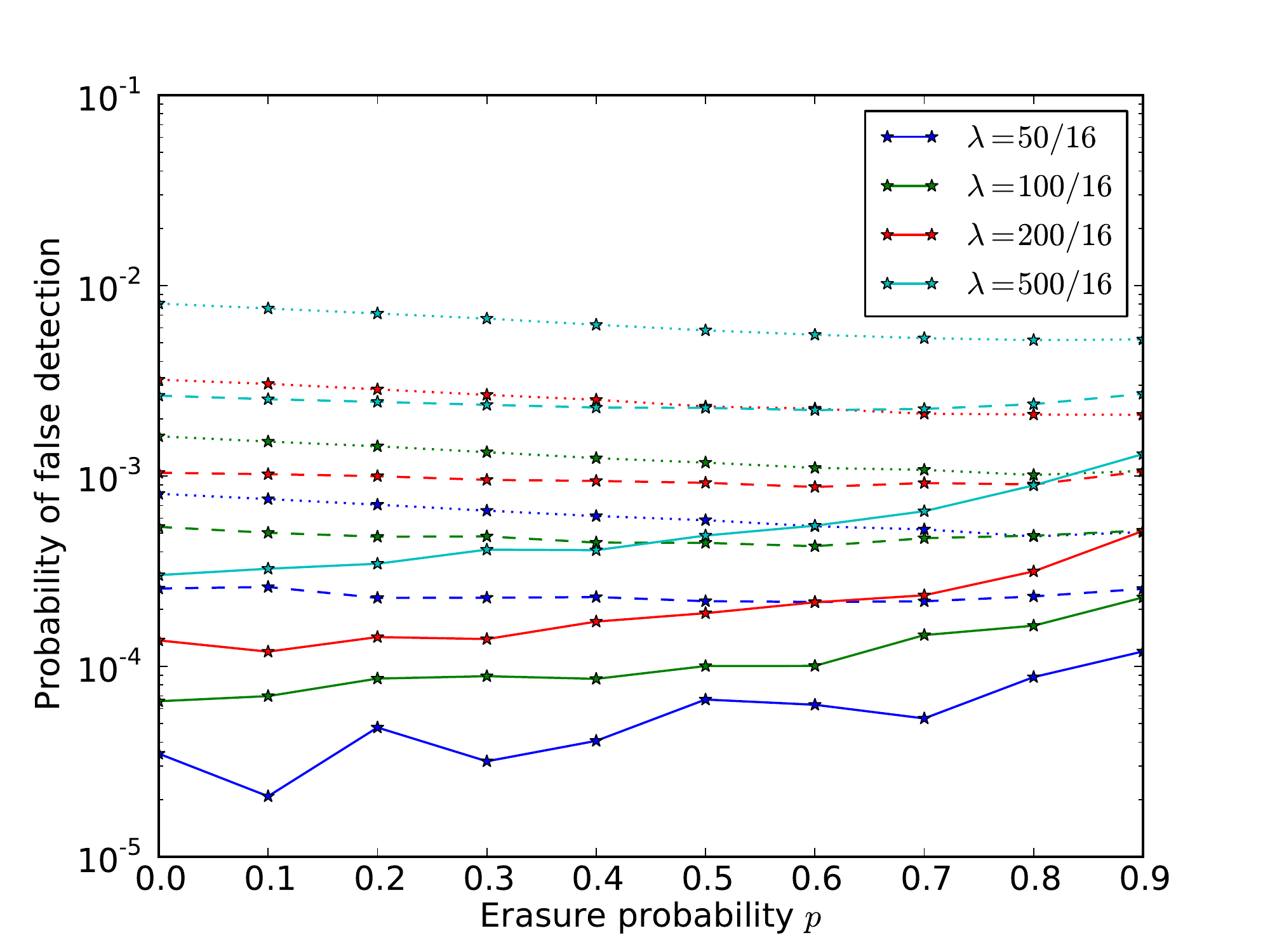}}
  \quad\subfloat[Simulated, $M=1$.]{\label{fig:erasuresweepM1}\includegraphics[width=0.32\textwidth]{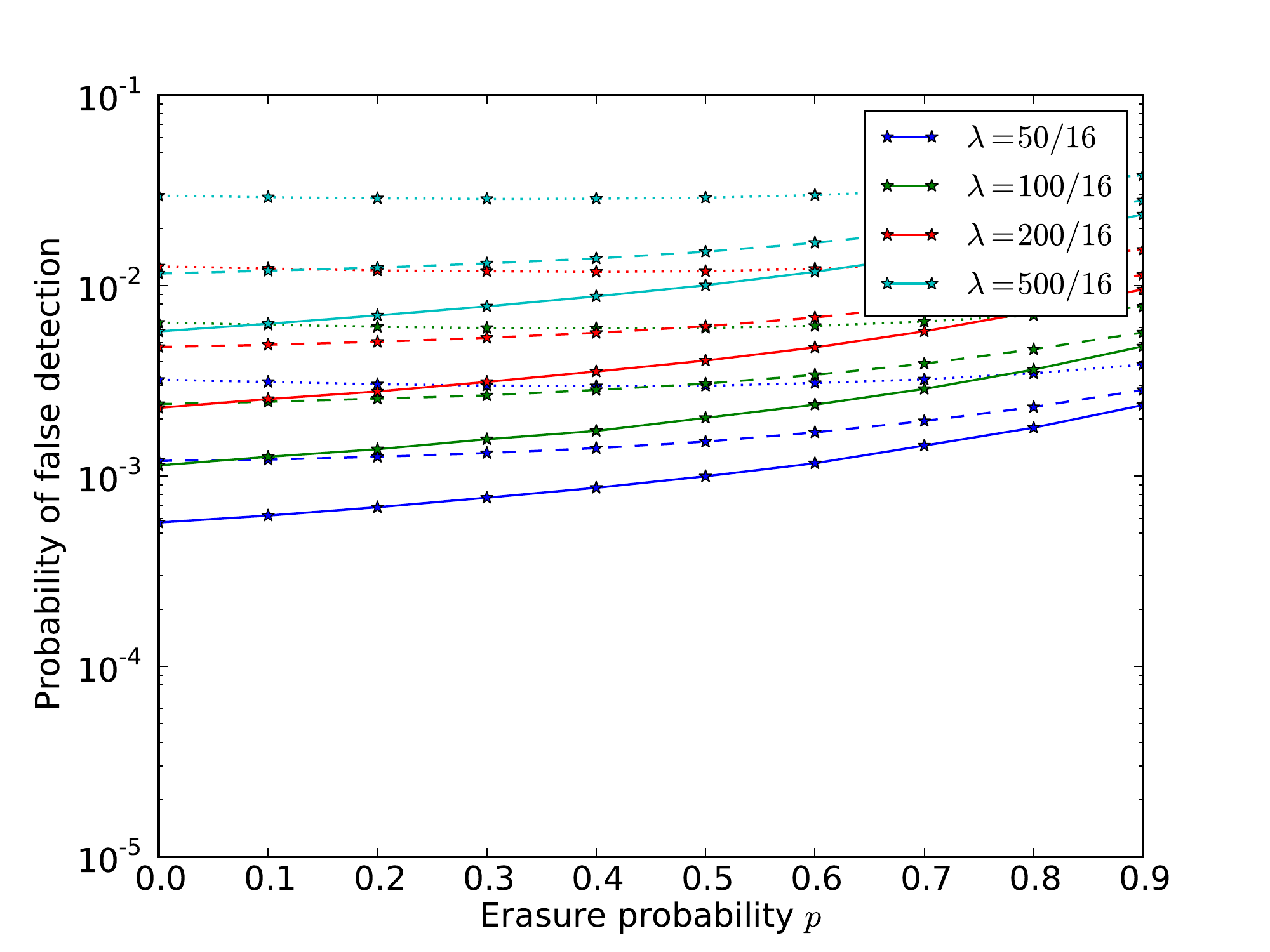}}
  \caption{Probability of false detection. $\epsilon=0$ for solid lines, $\epsilon=\frac{1}{32}$ for dashed lines, and $\epsilon=\frac{2}{32}$ for dotted lines.}
  \label{fig:erasuresweep}
  \vskip-0.7cm
\end{figure*}

\begin{lem}
The probability of false detection is:
\[ q_M=1-e^{-\frac{\lambda}{L}\sum_{\beta=1}^L\beta\sigma_\beta}, \]
when $\epsilon=0$ and $p=0$, with $\sigma_b$ defined as in \equationref{eqn:sigmabeta}.
\end{lem}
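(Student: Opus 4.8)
The plan is to recover $q_M$ by the same Poisson-thinning argument that yielded $q_0$, the only difference being that the per-unit-time probability of a false-triggering arrival now varies across the window instead of staying fixed at $\frac{1}{L}$. Throughout I keep the assumptions $\epsilon=0$ and $p=0$, under which the true packet $(i+1)$ is guaranteed to arrive and, having Hamming distance $0$ on both terms of \equationref{eqn:D}, is paired the instant it arrives. Consequently a false detection can be caused only by a foreign packet that lands in a virtual slot \emph{before} the true arrival and whose \ACC satisfies the $D\le M$ threshold; this is exactly why only the timebins of $\mathcal{A}$ and the pre-arrival portion of $\mathcal{B}$, of total duration $\sum_{\beta}\sigma_\beta$, are relevant.

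Next I would fix a single timebin $c\in\mathcal{C}$ and identify the set of received \ACC values that trigger a false pairing there. A foreign packet arriving in a coincident slot $\xi$ with observed \ACC $u$ is accepted precisely when $H(\xi,u)\le M-H(y_i,\xi-1)$, i.e.\ when $u\in U(\xi)$; taking the union over all slots sharing the timebin, without double counting as encoded by $d(c)=\left|\cup_{\xi\in c}U(\xi)\right|$, gives the full acceptance set, of size $d(c)$. Since the foreign \ACC is uniform on the $L$ values and independent of its arrival time, the probability that a packet arriving in a timebin with $d(c)=\beta$ causes a false detection is $\frac{\beta}{L}$, the natural generalisation of the $\frac{1}{L}$ used for $M=0$.

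It then remains to aggregate over the window. By the thinning property of a Poisson process, the false-triggering arrivals form an inhomogeneous Poisson process whose rate at time $t$ is $\lambda\,\frac{d(c(t))}{L}$, with $c(t)$ the active timebin. Grouping the window by the value $\beta=d(c)$, whose total duration is $\sigma_\beta$ from \equationref{eqn:sigmabeta}, the expected number of false-triggering arrivals is $\frac{\lambda}{L}\sum_{\beta=1}^{L}\beta\sigma_\beta$. A false detection occurs iff this count is at least one, so the Poisson zero-count formula gives $q_M=1-e^{-\frac{\lambda}{L}\sum_{\beta=1}^{L}\beta\sigma_\beta}$. As a sanity check, for $M=0$ one has $\mathcal{A}=\emptyset$ and $d(\mathcal{B})=1$, hence $\sigma_1=\theta_{y_i}(1)=\sigma$ while every other $\sigma_\beta=0$, and the expression collapses to $q_0$.

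The step I expect to be the main obstacle is the bookkeeping behind $d(c)$ when several virtual slots collapse into one timebin, as happens in \tableref{tab:vslots} and \figureref{fig:vslots} where $\xi=\hex{41}$ and $\xi=\hex{C1}$ share a time interval. Here one must argue carefully that the acceptance sets $U(\xi)$ are \emph{unioned} rather than added, so that an \ACC value reachable through more than one coincident slot is counted once; getting this right is what makes $\frac{\beta}{L}$ the correct conditional probability and, for the worked example, what yields $d(\mathcal{B})=9$ rather than an overcount.
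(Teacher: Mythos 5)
Your proposal is correct, and it reaches the lemma by a genuinely shorter route than the paper. The probabilistic setup is identical in both arguments: the window before the guaranteed true arrival is partitioned into parts of duration $\sigma_\beta$ in which a foreign arrival triggers a false pairing with probability $\frac{\beta}{L}$ (uniform, independent \ACC{}s), with $d(c)=\left|\cup_{\xi\in c}U(\xi)\right|$ correctly handling coincident slots as a union rather than a sum, exactly as you flag. Where you diverge is the aggregation step. The paper introduces joint random variables $(K_\beta,F_\beta)$ for arrivals and erroneous-\ACC{} arrivals per part, writes $1-q_M$ as a sum over all arrival counts and all ways of distributing them among the $L$ parts, and collapses the resulting product of Poisson terms $(\lambda\sigma_\beta)^{k_\beta}\frac{e^{-\lambda\sigma_\beta}}{k_\beta!}\left(1-\frac{\beta}{L}\right)^{k_\beta}$ via the multinomial theorem and the exponential power series. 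You instead invoke the thinning (coloring) property of Poisson processes: the false-triggering arrivals form an inhomogeneous Poisson process of rate $\lambda\,d(c(t))/L$, so their total count is Poisson with mean $\frac{\lambda}{L}\sum_{\beta=1}^L\beta\sigma_\beta$, and the zero-count probability gives $q_M$ in one line. The two derivations are mathematically equivalent — the paper's series manipulation is in effect a hand proof of the thinning property for this piecewise-constant rate — but yours is cleaner and makes the structure (independent thinning per timebin, additivity of the mean) explicit, at the cost of citing a standard theorem the paper proves implicitly; your $M=0$ sanity check recovering $q_0=1-e^{-\lambda\sigma/L}$ is also consistent with the paper's motivating example. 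One cosmetic quibble: for $M=0$ the set $\mathcal{A}$ is not empty but consists of timebins $a$ containing no virtual slots, so $d(a)=0$ and they contribute only to $\sigma_0$, which drops out of the sum anyway — the conclusion is unaffected.
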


\begin{proof}
Let $K_\beta$ be the r.v. describing number of arrivals within the part with duration $\sigma_\beta$ allowing $\beta$ erroneous \ACC combinations,
and let $F_\beta$, $0\leq F_\beta\leq K_\beta$, be the r.v. describing the number of arriving, erroneous \ACC combinations.
From the definition that all events span the entire probability space:
\begin{align*}
&\sum_{k=0}^\infty\sum_{k_1+\cdots+k_L=k}\sum_{f_1=0}^{k_1}\cdots\sum_{f_L=0}^{k_L}\\
&\quad P(K_0=k_0,F_0=f_0,\ldots,K_L=k_L,F_L=f_L)=1,
\end{align*}
that is, there can be any number of arrivals within $\sum_{\beta=0}^L\sigma_\beta$ and they can distribute in any number of ways in the $L$ possible parts,
and in any part, there can be as many in error as there arrives.
The probability of false detection is given by the events where $\sum_{\beta=0}^Lf_\beta>0$.
Alternatively, and using that the events over time are independent observations:
\[ q_M=1-\sum_{k=0}^\infty\sum_{k_1+\cdots+k_L=k}\prod_{\beta=0}^LP(K_\beta=k_\beta,F_\beta=0). \]
The arrivals during $\sigma_\beta$ are random and Poisson with rate $\lambda\sigma_\beta$.
Continuing with $(1-q_M)$:
\begin{align*}
1-q_M&=\sum_{k=0}^\infty\sum_{k_1+\cdots+k_L=k}\prod_{\beta=0}^L(\lambda\sigma_\beta)^{k_\beta}\frac{e^{-\lambda\sigma_\beta}}{k_\beta!}\left(1-\frac{\beta}{L}\right)^{k_\beta}\\
&=\sum_{k=0}^\infty e^{-\lambda\sum_{\beta=0}^L\sigma_\beta}\sum_{k_1+\cdots+k_L=k}\prod_{\beta=0}^L\frac{\left(\lambda\sigma_\beta\left(1-\frac{\beta}{L}\right)\right)^{k_\beta}}{k_\beta!}\\
&=\sum_{k=0}^\infty\frac{e^{-\lambda\sum_{\beta=0}^L\sigma_\beta}}{k!}\left(\sum_{\beta=0}^L\lambda\sigma_\beta\left(1-\frac{\beta}{L}\right)\right)^k
\end{align*}
where the last equality follows from the multinomial theorem.
Continuing with the exponential power series, and inserting:
\begin{align*}
q_M&=1-e^{-\lambda\sum_{\beta=1}^L\sigma_\beta}e^{\sum_{\beta=1}^L\lambda\sigma_\beta\left(1-\frac{\beta}{L}\right)},
\end{align*}
which concludes the proof.
\end{proof}

The probability of false detection for the full range of possible \ACC{}s is plotted in \figureref{fig:falsealarm} for meters having an average transmission interval of 16 seconds.
In the figure the straight lines are for $M=0$,
where for $M=1$ the probability depends on the number of slots (and their duration) before the correct arrival which again depends on the \ACC.
It can be seen that the increase of $M$ from $M=0$ to $M=1$ increases the possibility of false detection with a little more than a decade.

Relating back to the statement in \sectionref{sec:pp} about reliably deciding to pair packets from the same sender for many more than 256 devices,
one can find from \figureref{fig:falsealarm} that the receiver can distinguish around 2000 devices,
and only make a false decision in less than $0.1\%$ of the pairings in the case where no bit errors are tolerated.

\subsection{Simulated False Detection for Various $\epsilon$ and $p$}
We will now consider simulated experiments in the same setting as for the analytic work,
but with nonzero bit error probability $\epsilon$ and erasure probability $p$.
The synchronization word in Wireless M-Bus is 32 bits,
and typical receivers, for example the cc1101 allows up to two bit errors in this sequence for the start of packet event to be detected.
We therefore run the experiments for values of $\epsilon=0,\frac{1}{32},\frac{2}{32}$.
The simulated results are in \figureref{fig:erasuresweepM0} and \figureref{fig:erasuresweepM1} averaged over the possible \ACC values.
It is clear that the simulated results matches the analytical result for $\epsilon=0$ and $p=0$.
Also it is interesting to see that the effect of $p$ is not as significant as the effect of $\epsilon$ when it comes to affecting the probability of false detection.

The results indicate that the \ISM algorithm should be parameterized with a low value of $M=0$ for correct operation,
and for relevant values of nonzero $\epsilon$ and $n=200$ meters a false detection will be made in the region around $0.1\%-1\%$ no matter the erasure probability $p$.

\section{Deployment Evaluation}
\label{sec:deployment}
We have tested the \ISM algorithm on a deployed receiver operating in harsh channel conditions.
The results are in \tableref{tab:experiment},
where the challenging channel conditions are clear from the number of erroneously received packets (detected by CRC).
The meters are Kamstrup Multical 402 where each transmitted packet with updated metering data is repeated six times every (on average) 16 second.
The experiment has run for 628 minutes,
corresponding to $\approx 2350$ transmitted packets for each meter within range of the receiver.
Since the transmissions are from actual meters,
there is no way to explicitly track the event of an erroneous pairing.
Instead we run the \ISM algorithm,
and for each pairing event,
we compare the two packets according to other fields within the packet,
and from this decide for the pairing validity.

The \ISM algorithm is run with $M=0$ and all arrivals,
no matter if the arrival is erroneous or not,
triggers the setup of a virtual slot.
This allows us to separate the pairing capability between a \underline{C}orrect/\underline{E}rroneous base packets and a \underline{C}orrect/\underline{E}rroneous arrival in a virtual slot.
For error correcting algorithms running on top of the \ISM algorithm,
the interesting measure is that for E$\rightarrow$E,
as the other three combinations immediately allow for receiving the payload data from at least one of the two packets in the pairing.
If we only consider the first step, $\approx15\%$ of all pairings are E$\rightarrow$E pairings (the number increases to $\approx18\%$ if two steps are considered).
Of these pairs it is likely that not all can be readily recovered,
but it is clear that a potential packet recovery gain exists.
Also in the first step, 0.45\% of E$\rightarrow$E pairings are false detections.
This is around 5 times larger than expected from \figureref{fig:erasuresweepM0}.
This difference stems from a series of sources.
Firstly, the judging upon if the two considered packets are in fact from the same meter is conservative.
Secondly, the effect of block errors as contrast to i.i.d. errors seems to play a role when observing the recorded data.
Thirdly, even though the receiver allows for $\frac{2}{32}$ bits to be in error in the synchronization word,
under these severe conditions, it is difficult for the receiver to retain synchronization.
Lastly, there exist repeating devices in the deployed network, forwarding a meters transmission to the end receiver.
This repetition does not follow the synchronous transmission scheme presented in this work,
and these repeated arrivals are likely to influence the performance number of the \ISM algorithm in a negative direction.

A final note to the results is the variation in probability of false detection for different steps.
The reason for this is from the mechanism in \ISM where a virtual slot is not setup for the next step, if a candidate is found for the current step.
This gives that the number of pairings for higher steps decreases, while the number of false detections largely remains the same,
and hence the relative ratio increases.

\begin{table}
\centering
\footnotesize
\subfloat[Overall experiemt values.]{%
\begin{tabular}{|lr|}\hline
Experiment duration & 628 minutes \\\hline
No. of different meters observed & 53 \\\hline
No. of erroneously received packets & 29794 \\\hline
No. of correctly received packets & 59683 \\\hline
\end{tabular}}\\
\subfloat[Packet pairings per \underline{C}orrect and \underline{E}rroneous. Last row is false positives.]{%
\bgroup
% http://tex.stackexchange.com/a/29260
\setlength{\tabcolsep}{0.4em}
\begin{tabular}{|lrrrrrrrrrr|}\hline
Step & 1 & 2 & 3 & 4 & 5 & 6 & 7 & 8 & 9 & 10 \\\hline
C$\rightarrow$C & 43599 & 3627 & 573 & 121 & 63 & 37 & 19 & 26 & 8 & 2 \\
C$\rightarrow$E & 5036 & 962 & 264 & 112 & 42 & 33 & 11 & 38 & 8 & 4 \\
E$\rightarrow$C & 5047 & 916 & 278 & 87 & 42 & 32 & 17 & 29 & 11 & 4 \\
E$\rightarrow$E & 9166 & 3738 & 1879 & 1029 & 611 & 425 & 298 & 361 & 142 & 89 \\\hline
E$\rightarrow$E & 41 & 39 & 18 & 36 & 34 & 22 & 22 & 26 & 17 & 13 \\
f.d. \%         & 0.45 & 1.03 & 0.95 & 3.38 & 5.27 & 4.92 & 6.88 & 6.72 & 10.69 & 12.75 \\\hline
\end{tabular}\egroup}
\caption{Experment values.}
\label{tab:experiment}
\vskip-0.7cm
\end{table}

\section{Implementation Feasability}
\label{sec:feasability}
\begin{figure}
  \vskip-0.65cm
  \includegraphics[width=\columnwidth]{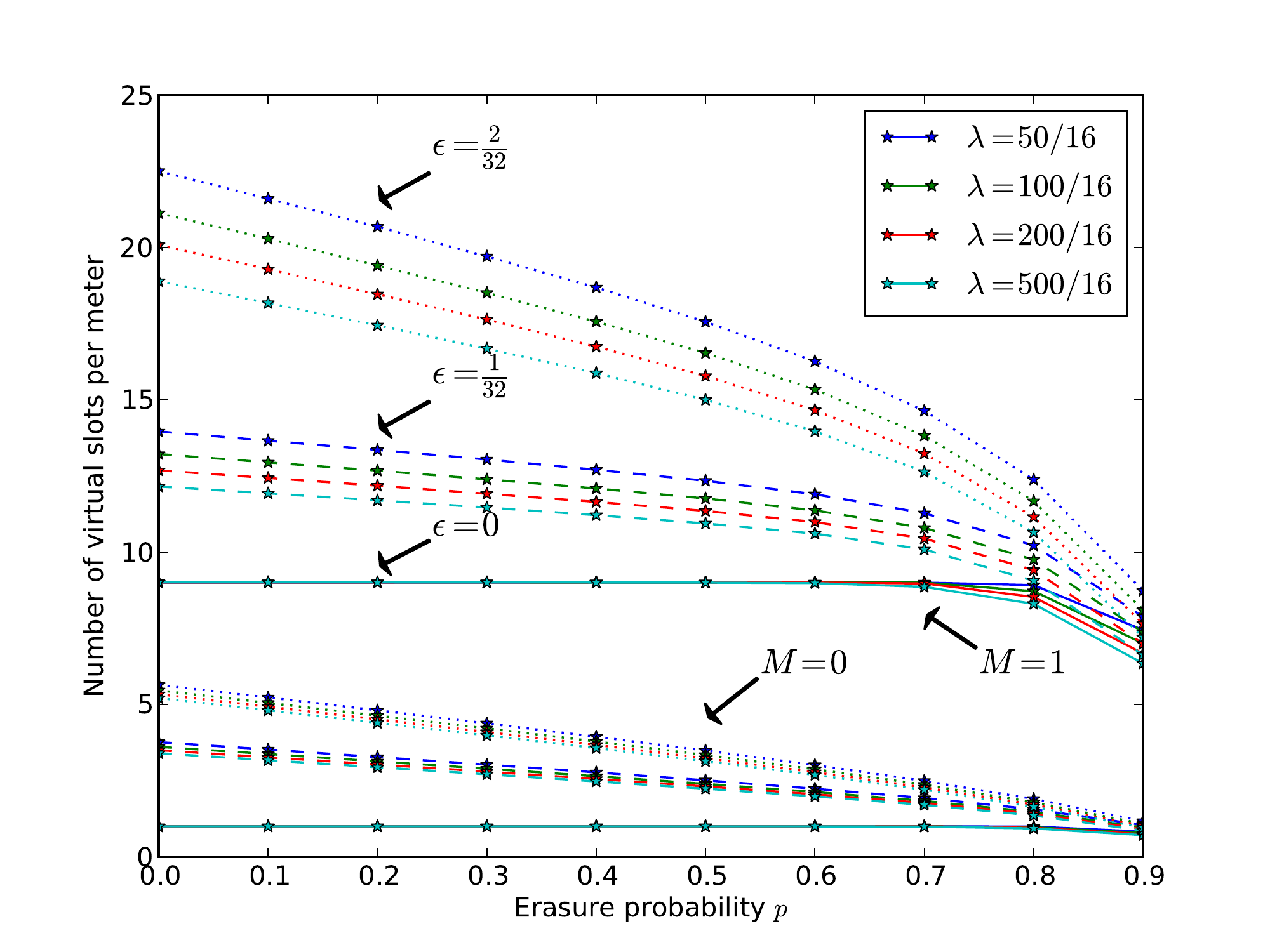}
  \caption{Average maximum number of simultaniously registered virtual slots per meter in the system.}
  \label{fig:memory}
  \vskip-0.7cm
\end{figure}

The overhead put on the receiver to perform the pairing should be fairly limited,
if further processing for packet recovery should be possible upon a pairing decision.
The virtual slots created when receiving an erroneous packet clearly imposes memory requirements,
but also the search for relevant virtual slots for an arriving packet can be a time consuming task.
The memory requirement implicitly gives an indication of the search time,
as the number of elements to search is proportional to the number of virtual slots.
In \figureref{fig:memory}, the simulated average maximum number of simultaneously registered virtual slots per number of participating meters is shown for various values of $M$, $\epsilon$, and $p$.
Clearly for $M=1$ the number of slots required is much greater than for $M=0$.
For $\epsilon=0$ and $p=0$ the values are as expected 1 or 9 virtual slots depending on $M$.
An increase of the bit error rate $\epsilon$ leads to a significant increase in the number of virtual slots,
because virtual slots created for an erroneously received \ACC will,
w.h.p., not match the next transmitted packet from the same meter.
As a consequence, the virtual slot is recomputed to find a candidate in the next window,
hence it will remain in the storage for a longer time (up to a maximum of 10 steps),
causing an increase in the number of virtual slots used per user.
For an increase of $p$ the number of virtual slots decreases,
because an erasure does not lead to virtual slots being created.

\section{Conclusion}
\label{sec:conclusion}
We show how a deterministic transmission interval between broadcast transmitted packets,
allows a receiver to derive the sender of the packet,
even though the packet payload itself is unreliable.
The focus of the analysis is the Wireless M-Bus protocol which provides strict timing in packet transmissions,
and we show that pairing validity depends on the number of meters within range of the receiver,
the number of tolerable bit errors, the bit error rate, and the packet erasure probability.

The proposed algorithm for packet pairing is a first step towards packet recovery across packet with the same meter data.
The experimental results from a deployed setup supports the use of a pairing mechanism for later recovery,
as more than $15\%$ of the conducted pairings on the receiver are between two erroneous packets,
leaving room for a substantial reception gain.

\bibliographystyle{IEEEtran}
\bibliography{literature}

\end{document}